\newtheorem{Theorem}{Theorem}[section]
\newtheorem{Lemma}[Theorem]{Lemma}
\newtheorem{Definition}[Theorem]{Definition}
\newtheorem{Remark}[Theorem]{Remark}
\def\nocolor#1{}
\begin{document}
	
	\title{Stable recovery of complex dictionary-sparse signals from phaseless measurements}
	
	\author{Lianxing Xia and Haiye Huo\thanks{Corresponding author.}\\
		\normalsize{Department of Mathematics, School of Mathematics and Computer Sciences},\\ \normalsize{Nanchang University, Nanchang~330031, Jiangxi, China} \\
		\normalsize{Emails: xialianxing@foxmail.com; hyhuo@ncu.edu.cn}
	}
	
	\date{}
	\maketitle

	\textbf{Abstract:}
	Dictionary-sparse phase retrieval, which is also known as phase retrieval with redundant dictionary, aims to reconstruct an original dictionary-sparse signal from its measurements without phase information. It is proved that if the measurement matrix $A$ satisfies null space property (NSP)/strong dictionary restricted isometry property (S-DRIP), then the dictionary-sparse signal can be exactly/stably recovered from its magnitude-only measurements up to a global phase. However, the S-DRIP holds only for real signals. Hence, in this paper, we mainly study the stability of the $\ell_1$-analysis minimization and its generalized $\ell_q\;(0<q\leq1)$-analysis minimization for the recovery of complex dictionary-sparse signals from phaseless measurements. First, we introduce a new $l_1$-dictionary restricted isometry property ($\ell_1$-DRIP) for rank-one and dictionary-sparse matrices, and show that complex dictionary-sparse signals can be stably recovered by magnitude-only measurements via $\ell_1$-analysis minimization provided that the quadratic measurement map $\mathcal{A}$ satisfies $\ell_1$-DRIP. Then, we generalized the $\ell_1$-DRIP condition under the framework of $\ell_q\;(0<q\leq1)$-analysis minimization.
	
	\textbf{Keywords:}
	Phase retrieval, dictionary sparsity, $\ell_q$-dictionary restricted isometry property ($\ell_q$-DRIP), $\ell_{q}$-analysis minimization
	
	\textbf{MSC Classification:} 41A27; 42C40; 94A12.
	
	\section{Introduction}\label{sec:I}
	
	\subsection{Sparse Phase Retrieval}
	The sparse phase retrieval problem \cite{CJL22,HLV24,JOH17,LCS23,LFL25,SEC15,WZG18} is to recover a sparse signal $\textbf{x}_0\in \mathbb{F}^{n}$ from the magnitude of its noisy measurements $y_i=\vert\langle \textbf{a}_i,\textbf{x}_0 \rangle \vert^2+e_i$, $i=1, 2,\dots, m$, where $\textbf{a}_i\in\mathbb{F}^{n}$ stands for the measurement vector, and $e_i\in \mathbb{R}$ represents the noise, $\mathbb{F}\in\left\lbrace \mathbb{R}, \mathbb{C}\right\rbrace$. Note that there exist some trivial ambiguities in sparse phase retrieval, such as time shift, conjugate flip, global phase change, etc. Therefore, the uniqueness in sparse phase retrieval is possible only up to a global phase.
	Let $\mathcal{A}:\mathbb{F}^{n\times n} \rightarrow \mathbb{R}^m$ be a linear map satisfies
	\begin{equation}\label{def:A} \mathcal{A}(X):=\mathcal{A}(\mathbf{x}\mathbf{x}^*)=(\textbf{a}_1^*\mathbf{x}\mathbf{x}^*\textbf{a}_1,\ldots,\textbf{a}_m^*\mathbf{x}\mathbf{x}^*\textbf{a}_m)
		=(\vert\langle \textbf{a}_1,\textbf{x}\rangle|^2,\ldots,\vert\langle \textbf{a}_m,\textbf{x}\rangle|^2).
	\end{equation}
	Set $\textbf{y}=(y_1,\ldots,y_m)^T\in\mathbb{R}^m$, and $\textbf{e}=(e_1,\ldots,e_m)^T\in\mathbb{R}^m$. Then, the goal of sparse phase retrieval is to reconstruct
	$\hat{\textbf{x}}_0:=\{c\textbf{x}_0:\vert c\vert=1,c\in \mathbb{F}\}$ from its noisy measurements
	$\textbf{y}=\mathcal{A}(\textbf{x}_0)+\textbf{e}$.
	
	Wang and Xu \cite{WX2014} showed that the null space property (NSP) is a necessary and sufficient condition to ensure sparse signals can be accurately reconstructed from the
	magnitude-only measurements. Voroninski and Xu \cite{VZ2016} utilized $\ell_{1}$ minimization to solving sparse phase retrieval for real sparse signals without noise as follows:
	\begin{equation}\label{sx}
		\mathop{\min} \limits_{\textbf{x}\in\mathbb{R}^n}\Vert \textbf{x}\Vert_1 \qquad  {\mbox{subject to}}  \qquad \vert A\textbf{x}\vert=\vert A\textbf{x}_0\vert,
	\end{equation}
	where $A=(\textbf{a}_1,\ldots,\textbf{a}_m)^T\in \mathbb{R}^{m\times n}$ is the measurement matrix. It is shown in \cite{VZ2016} that the real $k$-sparse signals can be exactly recovered from the magnitude-only measurements via $\ell_1$ minimization (\ref{sx}), if the measurement matrix $A$ satisfies strong restricted isometry property (SRIP). Moreover, for the noisy setting, Gao, Wang, and Xu \cite{GW2016} proved that SRIP can guarantee a stable recovery of real $k$-sparse signals by solving $\ell_1$ minimization. Xia and Xu \cite{XX2021} extended these results to complex sparse signals by solving $\ell_{1}$ minimization in the following
	\begin{equation}\label{csx}
		\mathop{\min} \limits_{\textbf{x}\in\mathbb{C}^n}\Vert \textbf{x}\Vert_1 \qquad  {\mbox{subject to}}  \qquad \Vert \mathcal{A}(\textbf{x}\textbf{x}^*)-\textbf{y}\Vert_2\le\varepsilon,
	\end{equation}
	where $\textbf{y}=\mathcal{A}(\textbf{x}_0)+\textbf{e}$ with $\|\textbf{e}\|_2\le\varepsilon$. Note that the constrained condition in (\ref{csx}) is non-convex, Xia and Xu \cite{XX2021} replaced (\ref{csx}) with recovering rank-one and sparse matrices by using the lifting technique, that is to say,
	\begin{equation}\label{AX}
		\mathop{\min} \limits_{X\in \mathbb{H}^{n\times n}}\Vert X\Vert_1  \qquad  {\mbox{subject to}} \qquad \Vert\mathcal{A}(X)-\textbf{y}\Vert_2\leq\varepsilon, \quad{\mbox{rank}}(X)=1,
	\end{equation}
	where $X=\textbf{x}\textbf{x}^*$, $\mathbb{H}^{n\times n}$ is the set of Hermitian matrices. It is shown that a complex $k$-sparse signal can be stably reconstructed from $\mathcal{A}$, providing $\mathcal{A}$ satisfies the novel RIP of order $(2,2sk)$. A map $\mathcal{A}$: $\mathbb{H}^{n\times n}\longrightarrow \mathbb{R}^m$ is called to satisfy the RIP of order $(s,k)$, if there exist two constants $\alpha$ and $\beta$, such that
	\begin{equation}\label{CRIP}
		\alpha\Vert X\Vert_{F}\leq\frac{1}{m}\Vert\mathcal{A}(X)\Vert_{1}\leq \beta\Vert X\Vert_{F}
	\end{equation}
	holds for every $X\in\mathbb{H}^{n\times n}$, with rank$(X)\leq s$ and $\Vert X\Vert_{0,2}\leq k$. Here, $\Vert X\Vert_{0,2}$ represents the number of non-zero rows in $X$, and $\Vert X\Vert_{F}=\sqrt{\sum_{j=1}^n\sum_{k=1}^n|X_{j,k}|^2}$. Then, Xia and Xu \cite{XX21TSP} proposed a Sparse PhaseLiftOff model to recover complex sparse signals from quadratic measurements.
	Huang et al. \cite{HSX2023} studied recovery conditions on recovering real (or complex) sparse signals from the magnitude-only measurements with prior knowledge via $\ell_1$ minimization. Xia and Xu \cite{XX2024} investigated the performance of amplitude-based model for complex sparse phase retrieval with noise. Huang and Xu \cite{HX2024} showed the performance of intensity-based estimators for complex phase retrieval and sparse phase retrieval, respectively.
	
	However, all the results mentioned above are based on $\ell_1$-minimization. Xia and Zhou \cite{XZ23} extended the uniqueness results for complex sparse signals \cite{XX2021} to $\ell_q$-minimization:
	\begin{equation*}
		\mathop{\min} \limits_{\mathbf{x}\in\mathbb{C}^{n\times n}}\Vert \textbf{x}\Vert_q^q \qquad  {\mbox{s.t.}}  \qquad  |A\mathbf{x}|^2=|A\mathbf{x}_0|^2,
	\end{equation*}
	where $0<q\le 1$. By utilizing the operator $\mathcal{A}$ and lifting technique, the $\ell_q$-minimization can be reformulated as:
	\begin{equation}\label{add-lp}
		\mathop{\min} \limits_{X\in\mathbb{H}^{n\times n}}\Vert X\Vert_q^q  \qquad  {\mbox{s.t.}}  \qquad \mathcal{A}(X)=\mathcal{A}(X_0),
	\end{equation}
	where $\|X\|_q^q:=\sum_{j=1}^n\sum_{k=1}^n|X_{j,k}|^q$.
	They proved that if a map satisfies $\ell_q$-RIP, then the low-rank and sparse matrices can be uniquely recovered by solving model (\ref{add-lp}). We recall that a map $\mathcal{A}$ has the $\ell_q$-RIP of order $(s,k)$, if there exist constants $\theta_{-},\;\theta_{+}>0$ that satisfy
	\begin{equation*}
		\theta_{-}\Vert X\Vert_{F}^q\leq\Vert\mathcal{A}(X)\Vert_{q}^q\leq \theta_{+}\Vert X\Vert_{F}^q,
	\end{equation*}
	for any $X\in\mathbb{H}^{n\times n}$ with rank$(X)\leq s$ and $\Vert X\Vert_{0,2}\leq k$. Li et al. \cite{LWL25} studied the theoretical guarantee on the $\ell_q$-minimization for nearly-sparse complex signals that are corrupted with noise.
	
	\subsection{Dictionary-Sparse Phase Retrieval}
	
	The conclusions mentioned above hold for signals which are sparse in the standard coordinate basis. However, in practice, many signals have special structures, such as block-sparse \cite{ZSZ22}, weighted-sparse \cite{Huo22,HX24,ZY2020}, or dictionary-sparse \cite{CE2011,G2016}, etc. In this paper, we focus on studying the phase retrieval problem of dictionary-sparse signals, which are not sparse in orthonormal bases, but rather in overcomplete dictionaries.
	Let $D\in \mathbb{F}^{n\times N}$ be a redundant dictionary, then the signal $\mathbf{x}_0\in\mathbb{F}^{n}$ is called dictionary $k$-sparse, if there exists a $k$-sparse signal $\mathbf{z}_0\in\mathbb{F}^{N}$, such that $\mathbf{x}_0=D\mathbf{z}_0$. Gao \cite{G2016} first established a theoretical framework for recovering a dictionary $k$-sparse signal $\textbf{x}_0= D\textbf{z}_0$ from the magnitude-only measurements $\textbf{y}=\vert AD\textbf{z}_0\vert$, where $\textbf{z}_0$ is $k$-sparse. Mathematically, it can be rewritten as $\ell_1$-analysis model:
	\begin{equation}\label{RD}
		\mathop{\min} \limits_{\textbf{x}\in\mathbb{F}^n}\Vert D^*\textbf{x}\Vert_1 \qquad {\mbox{subject to}} \qquad \Vert\vert A\textbf{x}\vert-\vert A\textbf{x}_0\vert\Vert_2\leq\varepsilon,
	\end{equation}
	where $D^*$ is the adjoint conjugate of $D$.
	When $\mathbb{F}=\mathbb{R}$, Gao $\cite{G2016}$ proposed a new concept called strong dictionary restricted isometry property (S-DRIP), and proved that real dictionary-sparse signals can be stably recovered by the phaseless measurements, if the measurement matrix $A$ satisfies S-DRIP. Cao and Huang \cite{CH2022} generalized the results proposed in \cite{G2016} to $\ell_q \;(0<q\le 1)$ minimization case. However, to the best of our knowledge, there are no relevant results on the stable recovery of dictionary-sparse signals in the complex number filed. To fill this gap, in this paper, we generalize the results proposed in $\cite{XX2021,XZ23}$ to the complex dictionary-sparse signals.
	
	The rest of the paper is organized as follows. In Sect.~\ref{sec2}, we investigate a sufficient condition, the $l_1$-DRIP, on quadratic measurements for exact recovery of complex dictionary-sparse signals by solving $l_1$-analysis minimization. Then, we extend the results to the $\ell_q\;(0<q\le 1)$-analysis model in Sect.~\ref{sec3}. Finally, we conclude the paper.
	
	\section{$\ell_1$-DRIP}\label{sec2}
	In this section, we consider to recover a complex dictionary-sparse signal $\textbf{x}_0$ from its phaseless measurements $\mathbf{y}=\mathcal{A}(\textbf{x}_0)+\textbf{e}=\mathcal{A}(D\textbf{z}_0)+\textbf{e}$, where $\|\textbf{e}\|_2\le\varepsilon$. The $\ell_1$-analysis model can be represented as
	\begin{equation}\label{Dx}
		\mathop{\min} \limits_{\textbf{x}\in\mathbb{C}^n}\Vert D^*\textbf{x}\Vert_1  \qquad {\mbox{subject to}} \qquad \Vert\mathcal{A}(\textbf{x})-\textbf{y}\Vert_2\leq\varepsilon.
	\end{equation}
	Motivated by $\cite{XX2021}$, we turn to recover rank-one and sparse matrices by lifting $(\ref{Dx})$, that is to say,
	\begin{equation}\label{DX}
		\mathop{\min} \limits_{X\in\mathbb{H}^{n\times n}}\Vert D^*XD\Vert_1 \quad\; {\mbox{subject to}} \quad \; \Vert\mathcal{A}(X)-\textbf{y}\Vert_2\leq\varepsilon,\quad {\mbox{rank}}(X)=1.
	\end{equation}
	
	Then, we extend the concept of RIP (\ref{CRIP}) for low-rank and sparse matrices to dictionary-sparse case.
	\begin{Definition}[$\ell_1$-DRIP]\label{DRIP}
		For a dictionary $D\in\mathbb{F}^{n\times N}$ and $\mathcal{A}$: $\mathbb{H}^{n\times n}\longrightarrow \mathbb{R}^m$. $\mathcal{A}$ is said to satisfy the $\ell_1$-DRIP of order $(s,k)$ with positive parameters $\alpha$ and $\beta$, if
		\begin{equation}\label{CDRIP}
			\alpha\lVert DZD^*\rVert_{F}\leq \frac{1}{m}\lVert\mathcal{A}(DZD^*)\rVert_{1}\leq \beta\lVert DZD^*\rVert_{F}
		\end{equation}
		holds for all $\mathit{Z}\in\mathbb{H}^{N\times N}$ with rank($Z$)$\leq s$ and $\lVert \mathit{Z}\rVert_{0,2}\leq k$.
	\end{Definition}
	
	Note that our proposed $\ell_1$-DRIP is a natural generalization of the RIP introduced in (\ref{CRIP}). In fact, for fixed $DZD^*\in\mathbb{H}^{n\times n}$, if $m\gtrsim k\log(n/k)$, then any map $\mathcal{A}: \mathbb{H}^{n\times n}\rightarrow\mathbb{R}^{m}$ obeying
	\begin{equation}
		\mathbb{P}\left[0.12\Vert DZD^*\Vert_{F}\leq\frac{1}{m}\Vert \mathcal{A}(DZD^*)\Vert_1\leq 2.45 \Vert DZD^*\Vert_{F}\right] \geq 1-2e^{-\xi m}
	\end{equation}
	($\xi$ is a positive constant) will satisfy the $\ell_1$-DRIP with high probability. This can be derived by a standard covering argument (see the proof of Theorem 1.2 in \cite{XX2021} for more details).
	
	Next, we show that complex dictionary-sparse signals can be exactly reconstructed from (\ref{DX}) up to a global phase, if $\mathcal{A}$ satisfies the $\ell_1$-DRIP of order $(2,2rk)$, where $r>0$ is chosen appropriately.
	\begin{Theorem}\label{recovery}
		Suppose that $D\in \mathbb{C}^{n\times N}$ is a dictionary, and $\mathbf{x}_{0}\in\mathbb{C}^n$ is a dictionary $k$-sparse signal. If $\mathcal{A(\cdot)}$ satisfies the $\ell_1$-DRIP of order $(2,2rk)$ with
		\begin{equation}\label{Ca}
			\alpha-\frac{4\beta}{\sqrt{r}}-\frac{\beta}{r}>0.
		\end{equation}
		Then, the solution $\hat{\mathbf{x}}$ to model (\ref{Dx}) satisfies
		\begin{equation}
			\lVert \hat{\mathbf{x}}{(\hat{\mathbf{x}})}^*-\mathbf{x}_{0} {\mathbf{x}_{0}}^* \rVert_{F}\leqslant \frac{2C\varepsilon}{\sqrt{m}},
		\end{equation}
		where
		\begin{eqnarray*}
			C=\frac{\frac{1}{r}+\frac{4}{\sqrt{r}}+1}{\alpha-\frac{4\beta}{\sqrt{r}}-\frac{\beta}{r}}.
		\end{eqnarray*}
		In addition, we have
		\begin{eqnarray*}
			\mathop{\min} \limits_{c\in\mathbb{C},\vert c\vert=1}\Vert c\cdot{\hat{\mathbf{x}}}-\mathbf{x}_{0}\Vert_2\leqslant\frac{2\sqrt{2}C\varepsilon}{\sqrt{m}\Vert \mathbf{x}_{0}\Vert_2}.
		\end{eqnarray*}	
	\end{Theorem}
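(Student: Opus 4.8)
The plan is to pass to the lifted, matrix-valued formulation and to control the Frobenius error $\lVert\hat{\mathbf x}\hat{\mathbf x}^* - \mathbf x_0\mathbf x_0^*\rVert_F$ by an $\ell_1$-DRIP argument, and then to descend to the vector estimate. Write $\hat X = \hat{\mathbf x}\hat{\mathbf x}^*$, $X_0 = \mathbf x_0\mathbf x_0^* = DZ_0D^*$ with $Z_0 = \mathbf z_0\mathbf z_0^*$ (rank one and supported on $T\times T$ for $T=\mathrm{supp}(\mathbf z_0)$, so $\mathrm{rank}(Z_0)\le 1$ and $\lVert Z_0\rVert_{0,2}\le k$), and set $H = \hat X - X_0$; note $H$ is Hermitian with $\mathrm{rank}(H)\le 2$. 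First I would record the two consequences of the hypotheses. Since $\lVert\mathcal A(X_0)-\mathbf y\rVert_2 = \lVert\mathbf e\rVert_2\le\varepsilon$, the matrix $X_0$ is feasible for the lifted program (\ref{DX}), so optimality of $\hat X$ gives the objective inequality $\lVert D^*\hat X D\rVert_1\le\lVert D^* X_0 D\rVert_1$; and the triangle inequality together with $\lVert\cdot\rVert_1\le\sqrt m\,\lVert\cdot\rVert_2$ gives the measurement bound $\tfrac1m\lVert\mathcal A(H)\rVert_1\le \tfrac{2\varepsilon}{\sqrt m}$.

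The heart of the argument is a stability inequality of the form $\bigl(\alpha - \tfrac{4\beta}{\sqrt r} - \tfrac{\beta}{r}\bigr)\lVert H\rVert_F \le \bigl(1 + \tfrac{4}{\sqrt r} + \tfrac1r\bigr)\,\tfrac1m\lVert\mathcal A(H)\rVert_1$; combined with the measurement bound and the assumption (\ref{Ca}) this immediately yields the claimed Frobenius estimate with the stated constant $C$. To obtain it, I would first convert the objective inequality into a cone condition on $W := D^* H D$: writing $D^*\hat X D = D^* X_0 D + W$ and splitting on the index set $S = T\times T$ on which $D^* X_0 D$ is supported, the reverse triangle inequality yields $\lVert W_{S^c}\rVert_1 \le \lVert W_{S}\rVert_1$, i.e. the off-support analysis mass of the error is dominated by its on-support mass.

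Next I would set up the block decomposition feeding the $\ell_1$-DRIP. Realizing $H = D Z D^*$ (possible because $X_0=DZ_0D^*$ and, $D$ being a full-rank redundant dictionary, $\hat{\mathbf x}\in\mathrm{range}(D)$, so $Z=\hat Z-Z_0$ with $\mathrm{rank}(Z)\le2$), I would sort the index blocks of $T^c$ into groups $T_1,T_2,\dots$ of size $rk$ in decreasing order of magnitude and split $H$ into a dominant piece $D Z_{\mathrm{main}} D^*$ carried by $(T\cup T_1)\times(T\cup T_1)$ — which is Hermitian with rank $\le 2$ and row-sparsity $\le 2rk$, hence admissible for the $\ell_1$-DRIP of order $(2,2rk)$ — together with the cross blocks and the tail blocks. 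Applying the DRIP lower bound to the dominant piece, the DRIP upper bound to each remaining piece, and the reverse triangle inequality for $\mathcal A$, the dominant-piece Frobenius norm is controlled by $\tfrac1m\lVert\mathcal A(H)\rVert_1$ plus $\beta$ times the sum of the remaining blocks' Frobenius norms; the latter is in turn bounded, via the standard sorting/tail estimate and the cone condition, by the $\tfrac{4}{\sqrt r}$ (cross terms, in both orientations) and $\tfrac1r$ (pure tail) contributions. Assembling these, and bounding $\lVert H\rVert_F$ by the dominant piece plus tail, produces the stability inequality.

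The main obstacle is exactly this decomposition step: unlike the $D=I$ case of \cite{XX2021}, each admissible piece must be presented in the form $D Z_j D^*$ with $Z_j$ Hermitian, $\mathrm{rank}(Z_j)\le 2$ and $\lVert Z_j\rVert_{0,2}\le 2rk$ so that the $\ell_1$-DRIP applies, and the two-dimensional (row \emph{and} column) block structure forces careful bookkeeping of the cross blocks between the support $T$ and the tail — this is what generates the $\tfrac{4}{\sqrt r}$ coefficient, and it is where the dictionary $D$ must be threaded consistently through both the cone condition (stated in the analysis domain $D^*(\cdot)D$) and the DRIP estimates (stated in the synthesis domain $D(\cdot)D^*$). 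Finally, the vector estimate follows from the Frobenius bound by the standard inequality $\min_{|c|=1}\lVert c\hat{\mathbf x}-\mathbf x_0\rVert_2 \le \sqrt 2\,\lVert\hat{\mathbf x}\hat{\mathbf x}^*-\mathbf x_0\mathbf x_0^*\rVert_F/\lVert\mathbf x_0\rVert_2$, which inserts the extra $\sqrt 2$ and the $\lVert\mathbf x_0\rVert_2^{-1}$ factor in the conclusion.
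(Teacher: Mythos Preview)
Your high-level plan---lift to matrices, extract a cone condition from optimality, block-decompose with a dominant piece of row-sparsity $\le 2rk$, feed the pieces through the $\ell_1$-DRIP, and close with the rank-one-difference inequality---is exactly the paper's route. Two points, however, are not right as written.

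First, the support set. You take $T=\mathrm{supp}(\mathbf z_0)$ and assert that $D^*X_0D$ is supported on $S=T\times T$. But $D^*X_0D=(D^*\mathbf x_0)(D^*\mathbf x_0)^*=(D^*D\mathbf z_0)(D^*D\mathbf z_0)^*$, and for a redundant dictionary $D^*D$ is not diagonal, so this matrix is \emph{not} supported on $T\times T$; your cone inequality $\|W_{S^c}\|_1\le\|W_S\|_1$ therefore does not follow. The paper instead works throughout in the analysis domain: it sets $T_0=\mathrm{supp}(D^*\mathbf x_0)$, partitions $T_0^c$ by decreasing magnitude of $(D^*\hat{\mathbf x})_{T_0^c}$, and takes for each admissible $Z_j$ simply the restriction of $(D^*\hat{\mathbf x})(D^*\hat{\mathbf x})^*-(D^*\mathbf x_0)(D^*\mathbf x_0)^*$ to the corresponding index block (which is automatically Hermitian with rank $\le2$ and the correct row-sparsity). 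There is no need to realize $H=DZD^*$ via a synthesis representation $\hat{\mathbf z}$ at all.

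Second, the $4/\sqrt r$ coefficient for the cross blocks does \emph{not} come from ``sorting/tail estimate and the cone condition'' alone. After sorting and the vector-level cone inequality $\|(D^*\hat{\mathbf x})_{T_0^c}\|_1\le\sqrt{k}\,\|(D^*\hat{\mathbf x})_{T_{01}}-D^*\mathbf x_0\|_2$, one arrives at the product $\|(D^*\hat{\mathbf x})_{T_{01}}\|_2\cdot\|(D^*\hat{\mathbf x})_{T_{01}}-D^*\mathbf x_0\|_2$, and it is precisely Lemma~\ref{xy} (the inequality $\|\mathbf u\mathbf u^*-\mathbf v\mathbf v^*\|_F\ge\tfrac{1}{\sqrt2}\|\mathbf u\|_2\|\mathbf u-\mathbf v\|_2$, applied to the restricted vectors under the normalization $\langle D^*\hat{\mathbf x},D^*\mathbf x_0\rangle\ge0$) that converts this into $\sqrt2\,\|D^*\bar HD\|_F$; the factor $2\cdot\sqrt2\cdot\sqrt2=4$ is where your $4/\sqrt r$ actually originates. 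You invoke this lemma only at the very end, but the paper uses it twice more inside Steps~1 and~2. The paper also uses Lemma~\ref{vu} (convex-combination representation) to handle $(D^*HD)_{T_{01}^c,T_{01}^c}$ under $\mathcal A$; the naive double sum over $T_i\times T_j$, $i,j\ge2$, would serve equally well, so your ``standard tail estimate'' is adequate there.
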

	In order to prove Theorem~\ref{recovery}, we first give two useful lemmas.
	
	\begin{Lemma}\cite{CZ2014,G2016}\label{vu}
		Assume that $\mathbf{v}\in\mathbb{R}^q$ satisfies $\Vert \mathbf{v}\Vert_{1}\leq k\mu$ and $\Vert \mathbf{v}\Vert_{\infty}\leq\mu$, where $\mu>0$ and $k$ is a positive integer. Set
		\[
		U(\mu,k,\mathbf{v}):=\{\mathbf{u}\in\mathbb{R}^q:\;{\rm{supp}}\left(\mathbf{u}\right)\subseteq {\rm{supp}}\left(\mathbf{v}\right),\|\mathbf{u}\|_0\le k,\;\Vert\mathbf{u}\Vert_{1}=\Vert \mathbf{v}\Vert_{1},\; \Vert\mathbf{u}\Vert_{\infty}\leq\mu\}.
		\]
		Then $\mathbf{v}$ can be represented by the convex combination of $k$-sparse vectors $\{\mathbf{u}_i\}_{i=1}^{P}$, that is to say,
		\begin{equation*}
			\mathbf{v}=\sum_{i=1}^{P}\lambda_i\mathbf{u}_i,
		\end{equation*}
		where $ 0\leq\lambda_i\leq1,\;\sum_{i=1}^{P}\lambda_i=1$, and $\mathbf{u}_i\in U(\mu,k,\mathbf{v})$ for $i=1,2,\cdots,P$.
	\end{Lemma}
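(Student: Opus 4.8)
The plan is to realise $\mathbf{v}$ as a point of a compact convex polytope whose extreme points are \emph{exactly} the admissible $k$-sparse vectors, and then to invoke the Minkowski (Krein--Milman) representation. Write $S=\mathrm{supp}(\mathbf{v})$ and $\sigma_i=\mathrm{sign}(v_i)$ for $i\in S$, and introduce
\[
\Omega:=\Big\{\mathbf{w}\in\mathbb{R}^q:\ w_i=0\ (i\notin S),\ \ 0\le \sigma_i w_i\le\mu\ (i\in S),\ \ \sum_{i\in S}\sigma_i w_i=\Vert\mathbf{v}\Vert_1\Big\}.
\]
For such sign-matched vectors one has $\sum_{i\in S}\sigma_i w_i=\Vert\mathbf{w}\Vert_1$, and $\sigma_i w_i\le\mu$ forces $\Vert\mathbf{w}\Vert_\infty\le\mu$, so every element of $\Omega$ already satisfies $\mathrm{supp}(\mathbf{w})\subseteq S$, $\Vert\mathbf{w}\Vert_1=\Vert\mathbf{v}\Vert_1$, and $\Vert\mathbf{w}\Vert_\infty\le\mu$; in particular $\mathbf{v}\in\Omega$ by the hypotheses on $\mathbf{v}$. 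The set $\Omega$ is the intersection of the box $\prod_{i\in S}[0,\mu]$ (in the variables $\sigma_i w_i$) with a single affine hyperplane, hence a nonempty compact polyhedron.

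First I would characterise the extreme points of $\Omega$. I claim that at an extreme point $\mathbf{w}$ at most one coordinate can satisfy $\sigma_i w_i\in(0,\mu)$: if two indices $i\neq j$ had $\sigma_i w_i,\sigma_j w_j\in(0,\mu)$, then for small $\epsilon>0$ the two vectors $\mathbf{w}\pm\epsilon(\sigma_i\mathbf{e}_i-\sigma_j\mathbf{e}_j)$ both lie in $\Omega$ (the sum constraint is preserved, and the two perturbed coordinates stay inside $[0,\mu]$), while $\mathbf{w}$ is their midpoint, contradicting extremality.

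The main step --- and the place where the hypothesis $\Vert\mathbf{v}\Vert_1\le k\mu$ is essential --- is to deduce $k$-sparsity from this characterisation. At an extreme point, say $t$ coordinates sit at the upper face $\sigma_i w_i=\mu$ and at most one coordinate takes a value $f\in(0,\mu)$, all others being zero. The sum constraint then reads $t\mu+f=\Vert\mathbf{v}\Vert_1\le k\mu$. If $f>0$ then $t\mu<k\mu$ forces $t\le k-1$ and the support has size $t+1\le k$; if $f=0$ then $t\mu\le k\mu$ gives support size $t\le k$. In either case the extreme point is $k$-sparse, sign-matched to $\mathbf{v}$, supported on $S$, of $\ell_1$-norm $\Vert\mathbf{v}\Vert_1$ and $\ell_\infty$-norm at most $\mu$, i.e.\ it lies in $U(\mu,k,\mathbf{v})$.

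It then remains to assemble the conclusion. Since $\Omega$ is a compact convex polytope, the Minkowski (Krein--Milman) theorem expresses every point --- in particular $\mathbf{v}$ --- as a convex combination of extreme points, and Carath\'eodory's theorem bounds the number needed by $\dim\Omega+1$, yielding a finite family $\{\mathbf{u}_i\}_{i=1}^{P}\subseteq U(\mu,k,\mathbf{v})$ and weights $\lambda_i\ge0$ with $\sum_i\lambda_i=1$ and $\mathbf{v}=\sum_i\lambda_i\mathbf{u}_i$, as claimed. An alternative, fully constructive route would peel off one extreme $k$-sparse vector at a time, reducing $\Vert\mathbf{w}\Vert_0$ by at least one at each stage while preserving $\Vert\mathbf{w}\Vert_1=\Vert\mathbf{v}\Vert_1$ and $\Vert\mathbf{w}\Vert_\infty\le\mu$ and terminating in finitely many steps; the polytope argument above is shorter. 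The one subtlety to watch throughout is the sign bookkeeping that forces each vertex to match the sign pattern of $\mathbf{v}$, which is precisely what guarantees that the convex combination reproduces $\mathbf{v}$ itself rather than merely a vector with the same coordinatewise magnitudes.
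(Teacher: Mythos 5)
Your proof is correct, and it is worth noting that the paper itself does not prove this lemma at all --- it imports it from the cited sources \cite{CZ2014,G2016}, where the argument is a constructive induction on the support size: one peels off an admissible $k$-sparse piece at each step, redistributing mass among the small entries until the remainder is itself $k$-sparse (the route you sketch as your ``alternative''). Your argument is genuinely different in presentation: you realise $\mathbf{v}$ as a point of the compact polytope $\Omega$ (a box in the sign-matched coordinates $\sigma_i w_i$ intersected with one affine hyperplane), show by a two-coordinate perturbation that any vertex has at most one coordinate strictly inside $(0,\mu)$, and then deduce $k$-sparsity of the vertices from the counting identity $t\mu+f=\Vert\mathbf{v}\Vert_1\le k\mu$ --- correctly isolating the only place the hypothesis $\Vert\mathbf{v}\Vert_1\le k\mu$ is used --- before concluding with Minkowski's theorem and Carath\'eodory's bound $P\le\dim\Omega+1$. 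The sign-matching bookkeeping you flag is handled properly: it is what makes $\sum_{i\in S}\sigma_i w_i=\Vert\mathbf{w}\Vert_1$ on all of $\Omega$, so every vertex automatically lands in $U(\mu,k,\mathbf{v})$ and the convex combination reconstructs $\mathbf{v}$ itself. What each approach buys: the inductive proof of Cai--Zhang is algorithmic and produces the weights explicitly, which matters in some quantitative refinements; your polytope argument is shorter, conceptually transparent (the lemma literally says the extreme points of this polytope are the $k$-sparse admissible vectors, which is the geometric content behind the title of \cite{CZ2014}), and gives for free the bound on the number $P$ of summands. Both establish exactly the statement used in the proofs of Theorems 2.3 and 3.3.
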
	
	\begin{Lemma}\cite{XX2021}\label{xy}
		For any $\mathbf{u},\mathbf{v}\in\mathbb{C}^p$, suppose that $\langle \mathbf{u},\mathbf{v}\rangle\geq0 $, then we have
		\begin{eqnarray*}
			\Vert \mathbf{u}\mathbf{u}^*-\mathbf{v}\mathbf{v}^*\Vert_{F}^2\geq\frac{1}{2}\Vert \mathbf{u}\Vert_{2}^2\Vert\mathbf{u}-\mathbf{v}\Vert_{2}^2,
		\end{eqnarray*}
		and
		\begin{eqnarray*}
			\Vert \mathbf{u}\mathbf{u}^*-\mathbf{v}\mathbf{v}^*\Vert_{F}^2\geq\frac{1}{2}\Vert \mathbf{v}\Vert_{2}^2\Vert \mathbf{u}-\mathbf{v}\Vert_{2}^2.
		\end{eqnarray*}
	\end{Lemma}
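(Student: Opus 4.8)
The plan is to reduce both inequalities to an elementary scalar estimate by expanding the Frobenius norm through the trace. Writing $\|M\|_F^2=\mathrm{tr}(MM^*)$ for the Hermitian matrix $M=\mathbf{u}\mathbf{u}^*-\mathbf{v}\mathbf{v}^*$ and using that $\mathbf{u}^*\mathbf{v}$ is a scalar, I would first establish the identity
\begin{equation*}
\|\mathbf{u}\mathbf{u}^*-\mathbf{v}\mathbf{v}^*\|_F^2=\|\mathbf{u}\|_2^4+\|\mathbf{v}\|_2^4-2|\langle\mathbf{u},\mathbf{v}\rangle|^2,
\end{equation*}
which holds for arbitrary $\mathbf{u},\mathbf{v}$, the cross terms contributing $-2\,\mathrm{tr}(\mathbf{u}\mathbf{u}^*\mathbf{v}\mathbf{v}^*)=-2|\mathbf{u}^*\mathbf{v}|^2$. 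Abbreviating $a=\|\mathbf{u}\|_2^2$, $b=\|\mathbf{v}\|_2^2$ and $t=\langle\mathbf{u},\mathbf{v}\rangle$, the hypothesis $\langle\mathbf{u},\mathbf{v}\rangle\ge 0$ makes $t$ a real number lying in $[0,\sqrt{ab}]$ (the upper bound being Cauchy--Schwarz), and simultaneously gives $\|\mathbf{u}-\mathbf{v}\|_2^2=a+b-2t$. The first claimed inequality then becomes exactly
\begin{equation*}
f(t):=a^2+b^2-2t^2-\tfrac12\, a\,(a+b-2t)\ge 0
\end{equation*}
for $t\in[0,\sqrt{ab}]$. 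This is precisely the point where the real-inner-product assumption is essential: it aligns the $|\langle\mathbf{u},\mathbf{v}\rangle|^2$ term on the left with the $\langle\mathbf{u},\mathbf{v}\rangle$ term appearing in $\|\mathbf{u}-\mathbf{v}\|_2^2$.

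Next I would exploit that $f$ is a concave quadratic in $t$ (its leading coefficient is $-2$), so its minimum over the closed interval $[0,\sqrt{ab}]$ is attained at an endpoint, and it suffices to verify $f(0)\ge 0$ and $f(\sqrt{ab})\ge 0$. The endpoint $t=0$ is easy: $f(0)=\tfrac12 a^2-\tfrac12 ab+b^2$ is a positive-definite quadratic form in $(a,b)$ (its discriminant in $a$ is negative), hence strictly positive for $(a,b)\neq(0,0)$.

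The main obstacle is the tight endpoint $t=\sqrt{ab}$. Substituting $a=x^2$, $b=y^2$ with $x,y\ge 0$ and clearing the factor $\tfrac12$, I expect $2f(\sqrt{ab})$ to equal $x^4+2x^3y-5x^2y^2+2y^4$, which I would factor as
\begin{equation*}
x^4+2x^3y-5x^2y^2+2y^4=(x-y)^2\,(x^2+4xy+2y^2).
\end{equation*}
Both factors are nonnegative for $x,y\ge 0$, so $f(\sqrt{ab})\ge 0$, which closes the first inequality (with equality forcing $x=y$, i.e.\ $\mathbf{u}=\mathbf{v}$). Finally, the second inequality follows for free by the symmetry $\mathbf{u}\leftrightarrow\mathbf{v}$, which swaps $a\leftrightarrow b$ while fixing $t$ and leaving both $\|\mathbf{u}\mathbf{u}^*-\mathbf{v}\mathbf{v}^*\|_F^2$ and $\|\mathbf{u}-\mathbf{v}\|_2^2$ invariant. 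I anticipate the only delicate step being the correct guess and verification of the factorization above; the concavity reduction is what prevents the argument from degenerating into a full case analysis over all $t$.
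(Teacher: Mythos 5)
Your proof is correct, but note that there is nothing internal to compare it against: the paper states Lemma~\ref{xy} as a quoted result from \cite{XX2021} and gives no proof of its own, so your argument stands or falls on its own merits. It stands. The trace expansion $\Vert \mathbf{u}\mathbf{u}^*-\mathbf{v}\mathbf{v}^*\Vert_F^2=\Vert\mathbf{u}\Vert_2^4+\Vert\mathbf{v}\Vert_2^4-2\vert\langle\mathbf{u},\mathbf{v}\rangle\vert^2$ is valid because $\mathrm{tr}(\mathbf{u}\mathbf{u}^*\mathbf{v}\mathbf{v}^*)=(\mathbf{u}^*\mathbf{v})(\mathbf{v}^*\mathbf{u})=\vert\mathbf{u}^*\mathbf{v}\vert^2$, and you correctly isolate where the hypothesis enters: with $t=\langle\mathbf{u},\mathbf{v}\rangle\ge 0$ real one has $\Vert\mathbf{u}-\mathbf{v}\Vert_2^2=a+b-2t$ and $t\in[0,\sqrt{ab}]$, and the restriction to $t\ge 0$ is genuinely necessary (take $\mathbf{v}=-\mathbf{u}$: the left side vanishes while the right side is $2\Vert\mathbf{u}\Vert_2^4$). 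The concavity reduction is legitimate, since a concave quadratic attains its minimum over a closed interval at an endpoint; $f(0)=\tfrac12 a^2-\tfrac12 ab+b^2$ is indeed a positive semidefinite form (discriminant $-\tfrac74 b^2\le 0$); and I verified by direct expansion that $(x-y)^2(x^2+4xy+2y^2)=x^4+2x^3y-5x^2y^2+2y^4$, which matches $2f(\sqrt{ab})$ after the substitution $a=x^2$, $b=y^2$, so the tight endpoint is handled. The symmetry argument for the second inequality also goes through precisely because $t$ is real under the hypothesis, so swapping $\mathbf{u}\leftrightarrow\mathbf{v}$ fixes $t$ and $\Vert\mathbf{u}-\mathbf{v}\Vert_2$ while only negating $\mathbf{u}\mathbf{u}^*-\mathbf{v}\mathbf{v}^*$, leaving its Frobenius norm unchanged. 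Altogether this is a clean, self-contained, elementary proof --- arguably a service to the reader, since the present paper leans on this lemma twice (in the proofs of Theorems~\ref{recovery} and~\ref{qrecovery}) without reproducing any justification.
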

	Now, we are ready to prove Theorem~\ref{recovery}.
	\begin{proof}[Proof of Theorem~\ref{recovery}]
		Suppose that the dictionary $D\in \mathbb{C}^{n\times N}$ is normalized, i.e., $DD^*=I_{n}$, and $\|D^*YD\|_F=\|Y\|_F$. Let $\hat{\textbf{x}}$ be a solution to $(\ref{Dx})$, then
		$e^{it}\hat{\textbf{x}}$ is also a solution to $(\ref{Dx})$ for all $t\in\mathbb{R}$. In order to use Lemma~\ref{xy} in the subsequent proof, without loss of generality, we suppose that
		\begin{equation*}
			\left\langle D^*\hat{\textbf{x}},D^*\textbf{x}_0\right\rangle\in \mathbb{R} \qquad and \qquad \left\langle D^*\hat{\textbf{x}},D^*\textbf{x}_0\right\rangle\geq0.
		\end{equation*}
		By using the lifting technique, the $\ell_1$-analysis model $(\ref{Dx})$ can be transformed into $\left(\ref{DX}\right)$. Note that $\hat{X}$ is the solution to $\left(\ref{DX}\right)$ if and only if
		$\hat{X}=\hat{\textbf{x}}(\hat{\textbf{x}})^\ast$.
		
		Denote $H:=\hat{X}-X_0$, and $X_0:=\textbf{x}_0{\textbf{x}_0}^\ast$. Let $T_0$ be the support of ${D}^*\textbf{x}_0$. Then, we partition $T_0^c$ into subsets $T_1,T_2,\cdots$ with
		$|T_i|=rk,\;i=1,2\cdots$ according to the non-increasing rearrangement of $({D}^*\hat{\textbf{x}})_{T_0^c}$ in magnitude. Let $T_{01}:=T_0\cup T_1$,\; $D^*\bar{H}D:=(D^*\hat{\textbf{x}})_{T_{01}}(D^*\hat{\textbf{x}})^*_{T_{01}}-(D^*\textbf{x}_0)_{T_{01}}(D^*\textbf{x}_0)^*_{T_{01}}$, and $\left( D^*HD\right)_{T_i,T_j}:=\left( D^*\hat{\textbf{x}}\right)_{T_i}\left(D^*\hat{\textbf{x}}\right)^*_{T_j}-\left( D^*{\textbf{x}_0}\right)_{T_i}\left(D^*{\textbf{x}_0}\right)^*_{T_j}$.
		By the construction of $T_i$, we obtain
		\begin{equation}\label{add:d1}
			\Vert(D^*\hat{\textbf{x}})_{T_i}\Vert_{2}\leq\frac{\Vert(D^*\hat{\textbf{x}})_{T_{i-1}}\Vert_{1}}{\sqrt{rk}}, {\mbox{for}}\,\;i\geq 2.
		\end{equation}
		Note that
		\begin{equation}\label{add:11}
			\Vert H\Vert_{\mathit{F}}=\Vert D^{\ast}HD \Vert_{\mathit{F}}\leq \Vert D^*HD-D^*\bar{H}D\Vert_{\mathit{F}}+\Vert D^*\bar{H}D\Vert_{\mathit{F}}.
		\end{equation}	
		Next, we use two steps to estimate the upper bounds for $\Vert D^*HD-D^*\bar{H}D\Vert_{\mathit{F}}$ and $\Vert D^*\bar{H}D\Vert_{\mathit{F}}$, respectively.
		
		\textbf{Step 1}: Estimate the upper bound of $\Vert D^*HD-D^*\bar{H}D\Vert_{\mathit{F}}$.
		
		After simple calculation, it is easy to obtain
		\begin{align}\label{Ti}
			&\Vert D^*HD-D^*\bar{H}D\Vert_{\mathit{F}} \nonumber\\
			\leq&\sum_{i\geq2}\sum_{j\geq2}\Vert\left( D^*HD\right)_{T_i,T_j}\Vert_{\mathit{F}}+\sum_{i=0,1}\sum_{j\geq2}\Vert\left( D^*HD\right)_{T_i,T_j}\Vert_{\mathit{F}}
			+\sum_{j=0,1}\sum_{i\geq2}\Vert \left( D^*HD\right)_{T_i,T_j} \Vert_{\mathit{F}}\nonumber\\
			=&\sum_{i\geq2}\sum_{j\geq2}\Vert \left( D^*HD\right)_{T_i,T_j}\Vert_{\mathit{F}}+2\sum_{i=0,1}\sum_{j\geq2}\Vert \left( D^*HD\right)_{T_i,T_j}\Vert_{\mathit{F}}.
		\end{align}	
		
		First, we focus on estimating $\sum_{i\geq2}\sum_{j\geq2}\Vert \left( D^*HD\right)_{T_i,T_j}\Vert_{\mathit{F}}$ in (\ref{Ti}).
		
		Since $\Vert D^*\hat{X}D\Vert_{1}\leq\Vert D^*{X_0}D\Vert_{1}$, we get
		\begin{align}
			\Vert D^*HD-(D^*HD)_{T_0,T_0}\Vert_{1}&=\Vert D^*\hat{X}D-(D^*\hat{X}D)_{T_0,T_0} \Vert_{1}\nonumber\\
			&\leq\Vert D^*{X_0}D\Vert_{1}-\Vert(D^*\hat{X}D)_{T_0,T_0}\Vert_{1}\nonumber\\
			&\leq\Vert D^*X_0D-(D^*\hat{X}D)_{T_0,T_0} \Vert_{1}\nonumber\\
			&=\Vert(D^*HD)_{T_0,T_0}\Vert_{1}.\label{add:d2}
		\end{align}
		Applying (\ref{add:d1}) and (\ref{add:d2}), we have
		\begin{align}\label{TiF}
			\sum_{i\geq2}\sum_{j\geq2}\Vert\left( D^*HD\right)_{T_i,T_j}\Vert_{\mathit{F}}
			&=\sum_{i\geq2}\sum_{j\geq2}\Vert\left(D^*\hat{\textbf{x}} \right)_{T_i}\Vert_{2}\cdot\Vert\left(D^*\hat{\textbf{x}} \right)_{T_j}\Vert_{2}\nonumber\\
			&=\left(\sum_{i\geq2}\Vert\left(D^*\hat{\textbf{x}} \right)_{T_i}\Vert_{2}\right)^2\leq\frac{1}{rk}\Vert\left(D^*\hat{\textbf{x}} \right)_{T_0^c} \Vert_{1}^2\nonumber\\
			&=\frac{1}{rk}\Vert\left( D^*HD\right)_{T_0^c,T_0^c}\Vert_{1}\leq\frac{1}{rk}\Vert\left( D^*HD\right)_{T_0,T_0}\Vert_{1}\nonumber\\
			&\leq\frac{1}{r}\Vert\left(D^*HD\right)_{T_0,T_0}\Vert_{F}\leq\frac{1}{r}\Vert D^*\bar{H}D\Vert_{\mathit{F}}.
		\end{align}
		
		Next, we estimate $\sum_{i=0,1}\sum_{j\geq2}\Vert\left( D^*HD\right)_{T_i,T_j}\Vert_{F}$ in (\ref{Ti}).
		
		By (\ref{add:d1}), for $i=0,1$, we obtain
		\begin{align}\label{xi}
			\sum_{j\geq2}\Vert\left( D^*HD\right)_{T_i,T_j}\Vert_{\mathit{F}}
			&=\Vert\left(D^*\hat{\textbf{x}}\right)_{T_i}\Vert_{2}\cdot\sum_{j\geq2}\Vert\left(D^*\hat{\textbf{x}} \right)_{T_j}\Vert_{2}\nonumber\\
			&\leq\frac{1}{\sqrt{rk}}\Vert\left(D^*\hat{\textbf{x}} \right)_{T_i}\Vert_{2}\Vert\left(D^*\hat{\textbf{x}} \right)_{T_0^c}\Vert_{1}.
		\end{align}
		Due to $\Vert\left(D^*\hat{\textbf{x}} \right)\Vert_{1}\leq\Vert\left(D^*\textbf{x}_0 \right)\Vert_{1}$, we have
		\begin{align}
			\Vert\left(D^*\hat{\textbf{x}} \right)_{T_0^c}\Vert_{1}&\leq\Vert D^*\textbf{x}_0\Vert_{1}-\Vert\left(D^*\hat{\textbf{x}} \right)_{T_0}\Vert_{1}
			\leq\Vert\left(D^*\hat{\textbf{x}} \right)_{T_{0}}-D^*\textbf{x}_0\Vert_{1}\nonumber\\
			&\leq\sqrt{k}\Vert\left(D^*\hat{\textbf{x}} \right)_{T_{0}}-D^*\textbf{x}_0\Vert_{2}\leq\sqrt{k}\Vert\left(D^*\hat{\textbf{x}} \right)_{T_{01}}-D^*\textbf{x}_0\Vert_{2}.\label{add:d3}
		\end{align}
		Substituting (\ref{add:d3}) into (\ref{xi}), yields
		\begin{align}\label{add:d4}
			\sum_{j\geq2}\Vert\left( D^*HD\right)_{T_i,T_j}\Vert_{\mathit{F}}
			\leq\frac{1}{\sqrt{r}}\Vert\left(D^*\hat{\textbf{x}} \right)_{T_i}\Vert_{2}\Vert\left(D^*\hat{\textbf{x}} \right)_{T_{01}}-D^*\textbf{x}_0\Vert_{2}, \;{\mbox{for}}\;\; i=0, 1.
		\end{align}
		
		Plugging (\ref{TiF}) and (\ref{add:d4}) into $\left( \ref{Ti}\right) $, we get
		\begin{align}
			\Vert D^*HD-D^*\bar{H}D\Vert_{\mathit{F}}
			&\leq\frac{1}{r}\Vert D^*\bar{H}D\Vert_{\mathit{F}}+\frac{2\sqrt{2}}{\sqrt{r}}\Vert\left(D^*\hat{\textbf{x}}\right)_{T_{01}}\Vert_{2}
			\Vert\left(D^*\hat{\textbf{x}} \right)_{T_{01}}-D^*\textbf{x}_0\Vert_{2}\nonumber\\
			&\leq\left(\frac{1}{r} +\frac{4}{\sqrt{r}}\right)\Vert D^*\bar{H}D\Vert_{\mathit{F}},\label{add:d5}
		\end{align}
		where we use Lemma $\ref{xy}$ in the last inequality.
		
		\textbf{Step 2}: Estimate the upper bound of $\Vert D^*\bar{H}D\Vert_{\mathit{F}}$.
		
		Since
		\begin{eqnarray*}
			\Vert\mathcal{A}[D(D^*HD)D^*]\Vert_{2}=\Vert\mathcal{A}(H)\Vert_{2}\leq\Vert\mathcal{A}(\hat{X})-\textbf{y}\Vert_{2}+\Vert\mathcal{A}\left(X_0\right)-\textbf{y} \Vert_{2}\leq2\varepsilon,
		\end{eqnarray*}
		we get
		\begin{align}\label{mA}
			\nonumber
			\frac{2\varepsilon}{\sqrt{m}}&\geq\frac{1}{\sqrt{m}}\Vert \mathcal{A}[D(D^*HD)D^*]\Vert_{2}\geq\frac{1}{m}\Vert \mathcal{A}[D(D^*HD)D^*]\Vert_{1}\\
			&\geq\frac{1}{m}\Vert \mathcal{A}[D(D^*\bar{H}D)D^*]\Vert_{1}-\frac{1}{m}\Vert \mathcal{A}[D(D^*HD -D^*\bar{H}D)D^*]\Vert_{1}.
		\end{align}
		To obtain a lower bound of $\frac{1}{m}\Vert \mathcal{A}[D(D^*\bar{H}D)D^*]\Vert_{1}-\frac{1}{m}\Vert \mathcal{A}[D(D^*HD -D^*\bar{H}D)D^*]\Vert_{1}$, we estimate the bound of $\frac{1}{m}\Vert \mathcal{A}[D(D^*\bar{H}D)D^*]\Vert_{1}$ from below and $\frac{1}{m}\Vert \mathcal{A}[D(D^*HD-D^*\bar{H}D)D^*]\Vert_{1}$ from above, respectively.
		Notice that
		\[
		{\mbox{rank}}(D^*\bar{H}D)\leq2,\; {\mbox{and}}\;\; \Vert D^*\bar{H}D\Vert_{0,2}\leq(r+1)k.
		\]
		By the definition of $\ell_1$-DRIP (\ref{CDRIP}), yields
		\begin{eqnarray}\label{cm}
			\frac{1}{m}\Vert \mathcal{A}[D(D^*\bar{H}D)D^*]\Vert_{1}\geq \alpha\Vert D(D^*\bar{H}D)D^*\Vert_{F}=\alpha\Vert D^*\bar{H}D\Vert_{F}.
		\end{eqnarray}
		Due to
		\begin{align*}
			D^*HD-D^*\bar{H}D=&(D^*HD)_{T_0,T_{01}^c}+(D^*HD)_{T_{01}^c,T_0}+(D^*HD)_{T_1,T_{01}^c}\\
			&\quad+(D^*HD)_{T_{01}^c,T_1}+(D^*HD)_{T_{01}^c,T_{01}^c},
		\end{align*}
		we obtain
		\begin{align}
			\frac{1}{m}\Vert\mathcal{A}[D(D^*HD -D^*\bar{H}D)D^*]\Vert_{1}
			&\leq\frac{1}{m}\left\|\mathcal{A}\left[D((D^*HD)_{T_0,T_{01}^c}+(D^*HD)_{T_{01}^c,T_0})D^*\right]\right\|_{1}\nonumber\\
			&\;+\frac{1}{m}\left\|\mathcal{A}\left[D((D^*HD)_{T_1,T_{01}^c}+(D^*HD)_{T_{01}^c,T_1})D^*\right]\right\|_{1}\nonumber\\
			&\quad+\frac{1}{m}\left\|\mathcal{A}\left[D(D^*HD)_{T_{01}^c,T_{01}^c}D^*\right]\right\|_{1}.\label{add:d8}
		\end{align}
		By the definition of $\ell_1$-DRIP (\ref{CDRIP}) and (\ref{add:d4}), for $i=0,1$, we get
		\begin{align}
			&\frac{1}{m}\left\|\mathcal{A}\left[D((D^*HD)_{T_i,T_{01}^c}+(D^*HD)_{T_{01}^c,T_i})D^*\right]\right\|_{1}\nonumber\\
			\leq&\frac{1}{m}\sum_{j\geq2}\left\|\mathcal{A}\left[D((D^*HD)_{T_i,T_j}+(D^*HD)_{T_j,T_i})D^*\right]\right\|_{1}\nonumber\\
			\leq&\beta\sum_{j\geq2}\Vert D((D^*HD)_{T_i,T_j}+(D^*HD)_{T_j,T_i})D^*\Vert_{\mathit{F}}\nonumber\\
			=&\beta\sum_{j\geq2}\Vert(D^*HD)_{T_i,T_j}+(D^*HD)_{T_j,T_i}\Vert_{\mathit{F}}\nonumber\\
			\leq& \beta\sum_{j\geq2}\left(\Vert\left(D^*\hat{\textbf{x}}\right)_{T_i}(D^*\hat{\textbf{x}})_{T_j}^*\Vert_{\mathit{F}}
			+\Vert \left(D^*\hat{\textbf{x}}\right)_{T_j}(D^*\hat{\textbf{x}})_{T_i}^*\Vert_{\mathit{F}}\right) \nonumber\\
			=&2\beta\sum_{j\geq2}\Vert\left(D^*\hat{\textbf{x}}\right)_{T_i} \Vert_{2}\Vert\left(D^*\hat{\textbf{x}}\right)_{T_j}  \Vert_{2}\nonumber\\
			\leq&\frac{2\beta}{\sqrt{r}}\Vert\left(D^*\hat{\textbf{x}}\right)_{T_i} \Vert_{2}\Vert\left(D^*\hat{\textbf{x}} \right)_{T_{01}}-D^*\textbf{x}_0\Vert_{2}.\label{HRIP}
		\end{align}
		Next, we turn to bound $\frac{1}{m}\Vert\mathcal{A}[D(D^*HD)_{T_{01}^c,T_{01}^c}D^*]\Vert_{1}$. Note that
		\begin{equation*}
			(D^*HD)_{T_{01}^c,T_{01}^c}=\left(D^*\hat{\textbf{x}} \right)_{T_{01}^c}\left(D^*\hat{\textbf{x}}\right)_{T_{01}^c}^*,
		\end{equation*}
		and
		\[
		\Vert\left(D^*\hat{\textbf{x}}\right)_{T_{01}^c}\Vert_{\infty}\leq\frac{\Vert\left(D^*\hat{\textbf{x}}\right)_{T_{1}}\Vert_{1}}{rk}.
		\]
		Denote
		\[
		\mu:=\max\left\{\frac{\Vert\left(D^*\hat{\textbf{x}}\right)_{T_{1}}\Vert_{1}}{rk}, \frac{\Vert\left(D^*\hat{\textbf{x}}\right)_{T_{01}^c}\Vert_{1}}{rk}\right\}.
		\]
		Suppose that $\Psi:={\rm{diag}}(Ph(\left(D^*\hat{\textbf{x}}\right)_{T_{01}^c}))$ is a diagonal matrix whose elements are the phase of $\left(D^*\hat{\textbf{x}} \right)_{T_{01}^c}$, that is to say,
		$\Psi^{-1}\left(D^*\hat{\textbf{x}}\right)_{T_{01}^c}$ is a real vector. By Lemma $\ref{vu}$, we get
		\begin{equation}\label{add:d6}
			\Psi^{-1}\left(D^*\hat{\textbf{x}}\right)_{T_{01}^c}=\sum_{i=1}^{P}\lambda_i\textbf{u}_i,
		\end{equation}
		where $0\leq\lambda_i\leq1,\;\sum_{i=1}^{P}\lambda_i=1,$\; $\textbf{u}_i$ is $rk$-sparse, and
		$\Vert \textbf{u}_i\Vert_{1}=\Vert (D^*\hat{\textbf{x}})_{T_{01}^c}\Vert_{1},$ and $\Vert \textbf{u}_i\Vert_{\infty}\leq\mu$.
		Hence,
		\begin{equation*}
			\Vert \textbf{u}_i\Vert_{2}\leq\sqrt{\Vert \textbf{u}_i\Vert_{1}\Vert\textbf{u}_i\Vert_{\infty}}\leq\sqrt{\mu\Vert \left(D^*\hat{\textbf{x}} \right)_{T_{01}^c}\Vert_{1}}.
		\end{equation*}
		When $\mu=\frac{\Vert\left(D^*\hat{\textbf{x}} \right)_{T_{1}}\Vert_{1}}{rk}$, we obtain
		\begin{align*}
			\Vert\textbf{u}_i\Vert_{2}&\leq\sqrt{\frac{\Vert\left(D^*\hat{\textbf{x}}\right)_{T_{1}}\Vert_{1}\Vert\left(D^*\hat{\textbf{x}} \right)_{T_{01}^c}\Vert_{1}}{rk}}
			=\sqrt{\frac{\Vert\left(D^*HD\right)_{T_1,T_{01}^c}\Vert_{1}}{rk}}\\
			&\leq\sqrt{\frac{\Vert D^*HD-\left( D^*HD\right)_{T_0,T_0}\Vert_{1} }{rk}}\leq\sqrt{\frac{\Vert\left( D^*HD\right)_{T_0,T_0}\Vert_{1}}{rk}}\\
			&\leq\sqrt{\frac{\Vert\left( D^*HD\right)_{T_0,T_0}\Vert_{F}}{r}}\leq\sqrt{\frac{\Vert\left( D^*\bar{H}D\right)\Vert_{F}}{r}}.
		\end{align*}
		When $\mu=\frac{\Vert\left(D^*\hat{\textbf{x}} \right)_{T_{01}^c}\Vert_{1}}{rk}$, we have
		\begin{align*}
			\Vert\textbf{u}_i\Vert_{2}&\leq\sqrt{\frac{\Vert\left(D^*\hat{\textbf{x}} \right)_{T_{01}^c}\Vert_{1}\Vert\left(D^*\hat{\textbf{x}} \right)_{T_{01}^c}\Vert_{1}}{rk}}
			=\sqrt{\frac{\Vert\left(D^*HD\right)_{T_{01}^c,T_{01}^c}\Vert_{1}}{rk}}\\
			&\leq\sqrt{\frac{\Vert D^*HD-\left(D^*HD\right)_{T_0,T_0}\Vert_{1}}{rk}}\leq\sqrt{\frac{\Vert\left(D^*HD\right)_{T_0,T_0}\Vert_{1}}{rk}}\\
			&\leq\sqrt{\frac{\Vert\left(D^*HD\right)_{T_0,T_0}\Vert_{F}}{r}}\leq\sqrt{\frac{\Vert\left(D^*\bar{H}D\right)\Vert_{F}}{r}}.
		\end{align*}
		Therefore, we have
		\begin{equation}\label{ua}
			\Vert\textbf{u}_i\Vert_{2}\leq\sqrt{\frac{\Vert\left( D^*\bar{H}D\right)\Vert_{F}}{r}},\;\;{\mbox{for}}\;\;i=1,2,\cdots,P.
		\end{equation}
		By (\ref{add:d6}), we know
		\begin{align}
			\left(D^*HD\right)_{T_{01}^c,T_{01}^c}&=\left(D^*\hat{\textbf{x}}\right)_{T_{01}^c}\left(D^*\hat{\textbf{x}}\right)_{T_{01}^c}^*
			=\left(\sum_{i=1}^{P}\lambda_i\Psi\textbf{u}_i\right)\left(\sum_{i=1}^{P}\lambda_i\Psi\textbf{u}_i\right)^*\nonumber\\
			&=\sum_{i<j}\lambda_i\lambda_j\Psi\left(\textbf{u}_i\textbf{u}_j^*+\textbf{u}_j\textbf{u}_i^*\right)\Psi^{-1}+\sum_{i}\lambda_i^2\Psi\textbf{u}_i\textbf{u}_{i}^*\Psi^{-1}.\label{add:d7}
		\end{align}
		Combining $\ell_1$-DRIP (\ref{CDRIP}), (\ref{ua}) and (\ref{add:d7}), we have
		\begin{align}
			\frac{1}{m}\left\|\mathcal{A}\left[D(D^*HD)_{{T_{01}^c},T_{01}^c}D^*\right]\right\|_{1}
			&\leq\beta\sum_{i<j}\lambda_i\lambda_j\Vert\textbf{u}_i\textbf{u}_j^*+\textbf{u}_j\textbf{u}_i^*\Vert_{F}
			+\beta\sum_{i}\lambda_i^2\Vert\textbf{u}_i\textbf{u}_i^*\Vert_{F}\nonumber\\
			&\leq\beta\left(2\sum_{i<j}\lambda_i\lambda_j\Vert\textbf{u}_i\Vert_{2}\Vert\textbf{u}_j\Vert_{2}+\sum_{i}\lambda_i^2\Vert\textbf{u}_i\Vert_{2}^2\right)\nonumber\\
			&\leq\frac{\beta\Vert\left(D^*\bar{H}D\right)\Vert_{F}}{r}\left(\sum_{i}\lambda_i\right)^2=\frac{\beta\Vert\left( D^*\bar{H}D\right)\Vert_{F}}{r}.\label{add:d9}
		\end{align}
		Substituting (\ref{HRIP}) and (\ref{add:d9}) into (\ref{add:d8}), we have
		\begin{align}\label{aF}
			&\frac{1}{m}\Vert\mathcal{A}[D(D^*HD-D^*\bar{H}D)D^*]\Vert_{1}\nonumber\\
			\leq&\frac{2\beta}{\sqrt{r}}\Big(\Vert\left(D^*\hat{\textbf{x}}\right)_{T_0}\Vert_{2}+\Vert\left(D^*\hat{\textbf{x}}\right)_{T_1}\Vert_{2}\Big)
			\Vert\left(D^*\hat{\textbf{x}}\right)_{T_{01}}-D^*\textbf{x}_0\Vert_{2}+\frac{\beta\Vert\left(D^*\bar{H}D\right)\Vert_{F}}{r}\nonumber\\
			\leq&\frac{2\sqrt{2}\beta}{\sqrt{r}}\Vert\left( D^*\hat{\textbf{x}}\right)_{T_{01}}
			\Vert_{2}\Vert\left(D^*\hat{\textbf{x}}\right)_{T_{01}}-D^*\textbf{x}_0\Vert_{2}+\frac{\beta\Vert\left(D^*\bar{H}D\right)\Vert_{F}}{r}\nonumber\\
			\leq&\beta\left(\frac{4}{\sqrt{r}}+\frac{1}{r}\right)\Vert\left(D^*\bar{H}D\right)\Vert_{F},
		\end{align}
		where we use Lemma~\ref{xy} in the last step. Combining $\left(\ref{mA}\right)$, $(\ref{cm})$, and $\left(\ref{aF}\right)$, we get
		\begin{align*}
			\frac{2\varepsilon}{\sqrt{m}}&\geq\frac{1}{m}\Vert \mathcal{A}[D(D^*\bar{H}D)D^*]\Vert_{1}-\frac{1}{m}\Vert \mathcal{A}[D(D^*HD -D^*\bar{H}D)D^*]\Vert_{1}.\\
			&\geq \alpha\Vert D^*\bar{H}D\Vert_{F}-\beta\left(\frac{4}{\sqrt{r}}+\frac{1}{r}\right)\Vert\left(D^*\bar{H}D\right)\Vert_{F}\\
			&=\left(\alpha-\frac{4\beta}{\sqrt{r}}-\frac{\beta}{r}\right)\Vert\left(D^*\bar{H}D\right)\Vert_{F}.
		\end{align*}
		Since
		\[
		\alpha-\frac{4\beta}{\sqrt{r}}-\frac{\beta}{r}>0,
		\]
		we have
		\begin{equation}\label{add:10}
			\Vert D^*\bar{H}D\Vert_{\mathit{F}}\leq\frac{2\varepsilon}{\sqrt{m}(\alpha-\frac{4\beta}{\sqrt{r}}-\frac{\beta}{r})}.
		\end{equation}
		
		Finally, substituting (\ref{add:d5}) and (\ref{add:10}) into (\ref{add:11}), we have
		\begin{align}\label{DHF}
			\Vert H\Vert_{\mathit{F}}\leq\left(\frac{1}{r}+\frac{4}{\sqrt{r}}+1\right) \Vert D^*\bar{H}D\Vert_{\mathit{F}}
			\leq\frac{2\varepsilon}{\sqrt{m}}\cdot\frac{\frac{1}{r}+\frac{4}{\sqrt{r}}+1}{\alpha-\frac{4\beta}{\sqrt{r}}-\frac{\beta}{r}}
			=\frac{2C\varepsilon}{\sqrt{m}},
		\end{align}	
		where
		\[
		C=\frac{\frac{1}{r}+\frac{4}{\sqrt{r}}+1}{\alpha-\frac{4\beta}{\sqrt{r}}-\frac{\beta}{r}}.
		\]
		Moreover, based on Lemma~\ref{xy}, we know
		\begin{align*}
			\mathop{\min}\limits_{c\in\mathbb{C},\vert c\vert=1}\Vert c\cdot{\hat{\textbf{x}}}-\textbf{x}_{0}\Vert_2
			&\leq\Vert {\hat{\textbf{x}}}-\textbf{x}_{0}\Vert_2\leq\frac{\sqrt{2}\Vert \hat{\mathbf{x}}\hat{\mathbf{x}}^*-\mathbf{x}_0\mathbf{x}_0^*\Vert_{\mathit{F}}}{\Vert \textbf{x}_0\Vert_{2}}\\
			&=\frac{\sqrt{2}\Vert H\Vert_{\mathit{F}}}{\Vert \textbf{x}_0\Vert_{2}}
			\leq\frac{2\sqrt{2}C\varepsilon}{\sqrt{m}\Vert \textbf{x}_{0}\Vert_2},
		\end{align*}
		which completes the proof.
	\end{proof}
	
	\begin{Remark}
		When $D=I$, our Theorem~\ref{recovery} reduces to Theorem~1.3 in \cite{XX2021}.
	\end{Remark}
	
	\section{$\ell_q$-DRIP}\label{sec3}
	
	It is well known that $\ell_q\;(0<q\le 1)$-minimization performs better in many aspects of phase retrieval than $\ell_1$-minimization (see \cite{CH2022,HX24,XZ23,LWL25}). This is because the $\ell_q$ quasi-norm is closer to the $\ell_0$-norm, the non-convex $\ell_q$ minimization yields a sparser solution than $\ell_1$-minimization; even with low sampling rate or low sparsity condition, it can still ensure high reconstruction accuracy. Moreover, it is possible to balance sparsity and computational complexity by adjusting the value of $q$. Therefore, in this section, we consider recovering the complex dictionary-sparse signals via $\ell_q\;(0<q\le 1)$-analysis minimization in noiseless setting:
	\begin{equation}\label{QDx}
		\mathop{\min} \limits_{\textbf{x}\in\mathbb{C}^n}\Vert D^*\textbf{x}\Vert_q^q \qquad {\mbox{s.t.}} \qquad \mathcal{A}(\mathbf{x}\mathbf{x}^*)=\mathcal{A}(\mathbf{x_0}\mathbf{x_0}^*),
	\end{equation}
	where $\mathcal{A}$ is defined the same as in (\ref{def:A}).
	
	Similar to $\ell_1$-analysis minimization, the $\ell_q\;(0<q\le1)$-analysis model can be expressed by using lifting technology:
	\begin{equation}\label{QDX}
		\mathop{\min} \limits_{X\in\mathbb{H}^{n\times n}}\Vert D^*XD\Vert_q^q \qquad {\mbox{s.t.}} \qquad \mathcal{A}(X)=\mathcal{A}(X_0),\;\;{\mbox{rank}}(X)=1,
	\end{equation}
	where $X=\mathbf{xx}^*$, and $X_0=\mathbf{x_0}\mathbf{x_0}^*$.

	Next, we extend the definition of $\ell_1$-DRIP proposed in (\ref{CDRIP}) for low-rank and dictionary-sparse matrices to $\ell_q$-analysis minimization (\ref{QDx}).
	
	\begin{Definition}[$\ell_q$-DRIP]\label{QDRIP}
		For a dictionary $\mathit{D}\in\mathbb{F}^{n\times N}$ and a map $\mathcal{A}$: $\mathbb{H}^{n\times n}\rightarrow \mathbb{R}^m$, $\mathcal{A}$ is called to have the $\ell_q$-DRIP of order $(s,k)$ with positive constants $\varphi$ and $\psi$, if
		\begin{equation}\label{QCDRIP}
			\varphi\lVert \mathit{DZD^*}\rVert_{\mathit{F}}^{q}\leq \lVert\mathcal{A}(\mathit{DZD^*})\rVert_{q}^q\leq \psi\lVert\mathit{DZD^*}\rVert_{\mathit{F}}^{q},
		\end{equation}
		holds for all $\mathit{Z}\in\mathbb{H}^{N\times N}$ with rank($Z$)$\leq s$ and $\lVert \mathit{Z}\rVert_{0,2}\leq k$.
	\end{Definition}
	
	\begin{Remark}
		If $m\gtrsim k+qk\log (n/k)$, then a Gaussian random map $\mathcal{A}$: $\mathbb{H}^{n\times n}\rightarrow \mathbb{R}^m$ satisfies the $\ell_q$-DRIP of order $(2,k)$ with high probability.
	\end{Remark}
	
	Finally, we derive a sufficient condition on quadratic measurements that can guarantee the unique recovery of dictionary-sparse signals via the $\ell_q$-analysis model.
	
	\begin{Theorem}\label{qrecovery}
		Let $D\in \mathbb{C}^{n\times N}$ be a dictionary, $\mathbf{x}_{0}\in\mathbb{C}^n$ be a dictionary $k$-sparse signal, and $0<q\le 1$. If the map $\mathcal{A}$ has the $\ell_q$-DRIP of order $(2, 2rk)$ with
		\begin{equation*}
			\varphi>\psi(\frac{1}{r^{2-q}}+\frac{2^{2+\frac{q}{2}}}{r^{1-\frac{q}{2}}})
		\end{equation*}
		holds for some sufficiently large $r>1$, then the solution $\tilde{\mathbf{x}}$ to the $\ell_q$-analysis minimization $(\ref{QDx})$ satisfies
		\begin{equation*}
			\tilde{\mathbf{x}}(\tilde{\mathbf{x}})^*=\mathbf{x}_0\mathbf{x}_0^*.
		\end{equation*}
	\end{Theorem}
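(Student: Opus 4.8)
The plan is to follow the two-step architecture of the proof of Theorem~\ref{recovery}, with two systematic replacements: every use of the triangle inequality for $\|\cdot\|_1$ becomes the subadditivity $\|a+b\|_q^q\le\|a\|_q^q+\|b\|_q^q$ (valid for $0<q\le1$), and the noise level is now $0$, which upgrades the conclusion from stability to exact recovery. Normalize $D$ so $DD^*=I_n$, fix the global phase so that $\langle D^*\tilde{\mathbf x},D^*\mathbf x_0\rangle\ge0$, lift $(\ref{QDx})$ to $(\ref{QDX})$, and set $\tilde X=\tilde{\mathbf x}\tilde{\mathbf x}^*$, $X_0=\mathbf x_0\mathbf x_0^*$, $H=\tilde X-X_0$. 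Partition $T_0^c$ into blocks $T_1,T_2,\dots$ of size $rk$ by the decreasing magnitude of $(D^*\tilde{\mathbf x})_{T_0^c}$, put $T_{01}=T_0\cup T_1$, and define $D^*\bar HD$ exactly as before. The feasibility constraint $\mathcal A(\tilde X)=\mathcal A(X_0)$ gives $\mathcal A(H)=0$, which is the engine of the noiseless argument. Everything then reduces to showing $\|D^*\bar HD\|_F=0$: once this holds, $(D^*\tilde{\mathbf x})_{T_{01}}(D^*\tilde{\mathbf x})_{T_{01}}^*=(D^*\mathbf x_0)(D^*\mathbf x_0)^*$ together with the phase normalization forces $(D^*\tilde{\mathbf x})_{T_{01}}=D^*\mathbf x_0$, and then the $\ell_q$ cone inequality $\|(D^*\tilde{\mathbf x})_{T_0^c}\|_q^q\le\|(D^*\tilde{\mathbf x})_{T_0}-D^*\mathbf x_0\|_q^q$ (from optimality $\|D^*\tilde{\mathbf x}\|_q^q\le\|D^*\mathbf x_0\|_q^q$) forces $(D^*\tilde{\mathbf x})_{T_0^c}=0$, whence $D^*\tilde{\mathbf x}=D^*\mathbf x_0$ and $\tilde{\mathbf x}\tilde{\mathbf x}^*=\mathbf x_0\mathbf x_0^*$.

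To bound $\|D^*\bar HD\|_F$, note $\mathrm{rank}(D^*\bar HD)\le2$ and $\|D^*\bar HD\|_{0,2}\le(r+1)k\le2rk$, so the lower inequality of the $\ell_q$-DRIP $(\ref{QCDRIP})$ gives $\|\mathcal A[D(D^*\bar HD)D^*]\|_q^q\ge\varphi\|D^*\bar HD\|_F^q$. Writing $D^*\bar HD=D^*HD-(D^*HD-D^*\bar HD)$, using $\mathcal A(H)=0$ and subadditivity of $\|\cdot\|_q^q$, I get $\|\mathcal A[D(D^*\bar HD)D^*]\|_q^q\le\|\mathcal A[D(D^*HD-D^*\bar HD)D^*]\|_q^q$. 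Thus it suffices to bound the latter above by $\psi\big(\frac{1}{r^{2-q}}+\frac{2^{2+q/2}}{r^{1-q/2}}\big)\|D^*\bar HD\|_F^q$. I split $D^*HD-D^*\bar HD$ into the cross blocks $(D^*HD)_{T_i,T_{01}^c}+(D^*HD)_{T_{01}^c,T_i}$ for $i=0,1$ and the tail block $(D^*HD)_{T_{01}^c,T_{01}^c}$, and apply $\|\cdot\|_q^q$-subadditivity so that each resulting piece is Hermitian, of rank $\le2$ and at most $2rk$-row-sparse, hence admissible for the upper inequality of $(\ref{QCDRIP})$; the Frobenius identities for $D(\cdot)D^*$ are handled exactly as in the proof of Theorem~\ref{recovery}.

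The quantitative heart is two replacements for the $\ell_1$ computations. First, the sorting estimate becomes $\|(D^*\tilde{\mathbf x})_{T_j}\|_2\le(rk)^{1/2-1/q}\|(D^*\tilde{\mathbf x})_{T_{j-1}}\|_q$, so $\sum_{j\ge2}\|(D^*\tilde{\mathbf x})_{T_j}\|_2^q\le(rk)^{q/2-1}\|(D^*\tilde{\mathbf x})_{T_0^c}\|_q^q$; combining with the cone inequality and the Hölder bound $\|v\|_q^q\le|\mathrm{supp}(v)|^{1-q/2}\|v\|_2^q$ applied to the $k$-sparse vector $(D^*\tilde{\mathbf x})_{T_0}-D^*\mathbf x_0$ makes the powers of $k$ cancel and produces the factor $r^{-(1-q/2)}$ for the cross blocks and $r^{-(2-q)}$ for the tail block. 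Second, the bilinear quantities that survive are converted to $\|D^*\bar HD\|_F$ by Lemma~\ref{xy}: the cross blocks leave a factor $\big(\|(D^*\tilde{\mathbf x})_{T_{01}}\|_2\|(D^*\tilde{\mathbf x})_{T_{01}}-D^*\mathbf x_0\|_2\big)^q\le 2^{q/2}\|D^*\bar HD\|_F^q$, while the tail block leaves $\|(D^*\tilde{\mathbf x})_{T_{01}}-D^*\mathbf x_0\|_2^{2q}$, which I control by the consequence $\|a-b\|_2^2\le\sqrt2\,\|aa^*-bb^*\|_F$ (obtained by adding the two inequalities of Lemma~\ref{xy} under $\langle a,b\rangle\ge0$). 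Tracking the powers of $2$ through the triangle inequality and the combination $\|(D^*\tilde{\mathbf x})_{T_0}\|_2^q+\|(D^*\tilde{\mathbf x})_{T_1}\|_2^q\le2\|(D^*\tilde{\mathbf x})_{T_{01}}\|_2^q$ then reproduces the constants appearing in the hypothesis.

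I expect the tail block $(D^*HD)_{T_{01}^c,T_{01}^c}=(D^*\tilde{\mathbf x})_{T_{01}^c}(D^*\tilde{\mathbf x})_{T_{01}^c}^*$ to be the main obstacle. Its row support far exceeds $2rk$, so $(\ref{QCDRIP})$ cannot be applied to it directly; and the device used in the $\ell_1$ proof, namely writing the phase-normalized tail as a convex combination $\sum_i\lambda_i\mathbf u_i$ of $rk$-sparse vectors via Lemma~\ref{vu}, no longer works for $q<1$, because $\|\cdot\|_q^q$-subadditivity then generates the weight $(\sum_i\lambda_i^q)^2$, which equals $1$ only at $q=1$ and is uncontrolled in the number of vertices for $q<1$. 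The remedy is to decompose directly along the blocks, $(D^*\tilde{\mathbf x})_{T_{01}^c}(D^*\tilde{\mathbf x})_{T_{01}^c}^*=\sum_{j\ge2}(D^*\tilde{\mathbf x})_{T_j}(D^*\tilde{\mathbf x})_{T_j}^*+\sum_{2\le j<l}\big((D^*\tilde{\mathbf x})_{T_j}(D^*\tilde{\mathbf x})_{T_l}^*+(D^*\tilde{\mathbf x})_{T_l}(D^*\tilde{\mathbf x})_{T_j}^*\big)$, each summand being Hermitian, of rank $\le2$ and $2rk$-row-sparse, so that $(\ref{QCDRIP})$ applies termwise and summation gives $\psi\big(\sum_{j\ge2}\|(D^*\tilde{\mathbf x})_{T_j}\|_2^q\big)^2$; this is exactly what yields the sharp $r^{-(2-q)}$ factor, whereas the Lemma~\ref{vu} route would only give $r^{-q}$. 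Assembling the cross and tail bounds gives $\varphi\|D^*\bar HD\|_F^q\le\psi\big(\frac{1}{r^{2-q}}+\frac{2^{2+q/2}}{r^{1-q/2}}\big)\|D^*\bar HD\|_F^q$, and since $\varphi>\psi\big(\frac{1}{r^{2-q}}+\frac{2^{2+q/2}}{r^{1-q/2}}\big)$ this forces $\|D^*\bar HD\|_F=0$, finishing the proof.
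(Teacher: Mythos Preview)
Your architecture matches the paper's almost exactly: normalize $D$, fix the phase, lift, partition $T_0^c$ into $rk$-blocks, and squeeze $\|D^*\bar HD\|_F$ between the lower $\ell_q$-DRIP bound and an upper bound obtained by splitting $D^*HD-D^*\bar HD$ into Hermitian rank-$\le2$, $2rk$-row-sparse pieces. Your diagnosis that Lemma~\ref{vu} fails for $q<1$ and must be replaced by the direct block decomposition of $(D^*\tilde{\mathbf x})_{T_{01}^c}(D^*\tilde{\mathbf x})_{T_{01}^c}^*$ into $(T_i,T_j)$ pieces is exactly the move the paper makes.

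There is one quantitative slip in the tail estimate. You pass from $\|(D^*\tilde{\mathbf x})_{T_0^c}\|_q^{2q}$ to $\|(D^*\tilde{\mathbf x})_{T_{01}}-D^*\mathbf x_0\|_2^{2q}$ via the \emph{vector} cone inequality and then back to $\|D^*\bar HD\|_F^q$ via the claimed bound $\|a-b\|_2^2\le\sqrt2\,\|aa^*-bb^*\|_F$. That detour costs an extra $2^{q/2}$, so your assembled inequality reads $\varphi\le\psi\bigl(\tfrac{2^{q/2}}{r^{2-q}}+\tfrac{2^{2+q/2}}{r^{1-q/2}}\bigr)$, which is strictly stronger than the hypothesis and therefore does not prove the theorem as stated. (Also, merely adding the two inequalities of Lemma~\ref{xy} gives constant $2$, not $\sqrt2$; the sharp constant $\sqrt2$ is true but needs a separate computation.) The paper avoids this loss by staying at the matrix level: it uses the \emph{matrix} cone inequality
\[
\|(D^*HD)_{T_0^c,T_0^c}\|_q^q \le \|(D^*HD)_{T_0,T_0}\|_q^q,
\]
which follows from $\|D^*\tilde XD\|_q^q\le\|D^*X_0D\|_q^q$ exactly as your vector version does, and then bounds $\|(D^*HD)_{T_0,T_0}\|_q^q\le k^{2-q}\|(D^*HD)_{T_0,T_0}\|_F^q\le k^{2-q}\|D^*\bar HD\|_F^q$. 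This yields the clean $1/r^{2-q}$ with no stray power of $2$. Swap this in and your proof matches the paper's constants.
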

	
	\begin{proof}
		Assume that the dictionary $D\in \mathbb{C}^{n\times N}$ is normalized, i.e., $DD^*=I_{n}$, and $\|D^*YD\|_F=\|Y\|_F$. Suppose that $\tilde{\mathbf{x}}$ is a solution to model (\ref{QDx}), then $e^{it}\tilde{\textbf{x}}$ is also a solution to model $(\ref{QDx})$ for all $t\in\mathbb{R}$. Hence, without loss of generality, set
		\begin{equation*}
			\left\langle D^*\tilde{\textbf{x}},D^*\textbf{x}_0\right\rangle\in \mathbb{R} \qquad {\mbox{and}} \qquad \left\langle D^*\tilde{\textbf{x}},D^*\textbf{x}_0\right\rangle\geq0.
		\end{equation*}
		Notice that the $\ell_q$-analysis model $(\ref{QDx})$ can be rewritten as model $\left(\ref{QDX}\right)$ by using lifting technique, and  $\tilde{X}$ is the solution to model $\left(\ref{QDX}\right)$ if and only if $\tilde{X}=\tilde{\textbf{x}}(\tilde{\textbf{x}})^\ast$.
		
		Let $H:=\tilde{\mathbf{x}}(\tilde{\mathbf{x}})^*-\textbf{x}_0{\textbf{x}_0}^\ast=\tilde{X}-X_0$. Denote $T_0$ as the support of ${D}^*\textbf{x}_0$, $T_1$ as the index set of the $rk$ largest entries of $({D}^*\tilde{\textbf{x}})_{T_0^c}$ in magnitude, $T_2$ as the index set of the second $rk$ largest entries of $({D}^*\tilde{\textbf{x}})_{T_0^c}$ in magnitude, and so on.  Let $T_{01}:=T_0\cup T_1$, $D^*\bar{H}D:=(D^*\tilde{\textbf{x}})_{T_{01}}(D^*\tilde{\textbf{x}})^*_{T_{01}}-(D^*\textbf{x}_0)_{T_{01}}(D^*\textbf{x}_0)^*_{T_{01}}$, and $\left( D^*HD\right)_{T_i,T_j}:=\left( D^*\tilde{\textbf{x}}\right)_{T_i}\left(D^*\tilde{\textbf{x}}\right)^*_{T_j}-\left( D^*{\textbf{x}_0}\right)_{T_i}\left(D^*{\textbf{x}_0}\right)^*_{T_j}$, for $i, j\ge 2$.
		
		By the construction of $T_j$, we have
		\begin{equation}\label{add:d1}
			\Vert\left(D^*\tilde{\mathbf{x}}\right)_{T_j}\Vert_2^q\leq\frac{\Vert\left(D^*\tilde{\mathbf{x}}\right)_{T_{j-1}}\Vert_q^q}{(rk)^{1-\frac{q}{2}}}, \;\; {\mbox{for}}\,\;j\geq 2.
		\end{equation}
		
		Note that
		\begin{align}
			\Vert H\Vert_F^q=&\Vert D^*HD\Vert_F^q\nonumber\\
			\leq&\Vert D^*\bar{H}D\Vert_F^q+\Vert\left(D^*HD\right)_{T_{01}^c,T_{01}^c}\Vert_F^q+2\sum_{i=0,1}\sum_{j\geq2}\Vert\left(D^*HD\right)_{T_{i},T_{j}}\Vert_F^q.\label{hf}
		\end{align}
		Next, we estimate the upper bounds for $\Vert D^*\bar{H}D\Vert_F^q$, $\Vert\left(D^*HD\right)_{T_{01}^c,T_{01}^c}\Vert_F^q$, and\\ $\sum_{i=0,1}\sum_{j\geq2}\Vert\left(D^*HD\right)_{T_{i},T_{j}}\Vert_F^q$, respectively.
		
		\textbf{Step 1}: Estimate the upper bound of $\Vert\left(D^*HD\right)_{T_{01}^c,T_{01}^c}\Vert_F^q$.
		
		Since $\tilde{\mathbf{x}}$ is a solution to the $\ell_q$-analysis minimization $(\ref{QDx})$, we have
		\begin{equation}\label{min1}
			\Vert D^*\tilde{X}D\Vert_q\leq\Vert D^*X_0D\Vert_q.
		\end{equation}
		Thus,
		\begin{align}\label{s2}
			\Vert\left(D^*HD\right)_{T_{0}^c,T_{0}^c}\Vert_q^q\leq&\Vert D^*HD-\left(D^*HD\right)_{T_{0},T_{0}}\Vert_q^q=\Vert D^*\tilde{X}D-(D^*\tilde{X}D)_{T_{0},T_{0}}\Vert_q^q\nonumber\\
			=&\Vert D^*\tilde{X}D\Vert_q^q-\Vert (D^*\tilde{X}D)_{T_{0},T_{0}}\Vert_q^q\overset{(\ref{min1})}{\leq}\Vert D^*X_0D\Vert_q^q-\Vert (D^*\tilde{X}D)_{T_{0},T_{0}}\Vert_q^q\nonumber\\
			\leq&\Vert D^*X_0D-(D^*\tilde{X}D)_{T_{0},T_{0}}\Vert_q^q=\Vert(D^*HD)_{T_{0},T_{0}}\Vert_q^q.
		\end{align}
		Therefore, we obtain
		\begin{align}\label{q:1}
			\Vert\left(D^*HD\right)_{T_{01}^c,T_{01}^c}\Vert_F^q\le&\sum_{i\geq2}\sum_{j\geq2}\Vert(D^*\tilde{X}D)_{T_{i},T_{j}}\Vert_F^q
			\leq\sum_{i\geq2}\sum_{j\geq2}\Vert\left(D^*\tilde{\mathbf{x}}\right)_{T_i}\Vert_2^q\Vert\left(D^*\tilde{\mathbf{x}}\right)_{T_j}\Vert_2^q\nonumber\\
			\overset{(\ref{add:d1})}{\leq}&\frac{1}{(rk)^{2-q}}\left(\sum_{i\geq1}\Vert(D^*\tilde{\mathbf{x}})_{T_i}\Vert_q^q\right)^2
			=\frac{1}{(rk)^{2-q}}\Vert\left(D^*\tilde{\mathbf{x}}\right)_{T_0^c}\Vert_q^{2q}\nonumber\\
			=&\frac{1}{(rk)^{2-q}}\Vert\left(D^*HD\right)_{T_{0}^c,T_{0}^c}\Vert_q^q
			\overset{(\ref{s2})}{\leq}\frac{1}{(rk)^{2-q}}\Vert(D^*HD)_{T_{0},T_{0}}\Vert_q^q\nonumber\\
			\leq&\frac{1}{r^{2-q}}\Vert(D^*HD)_{T_{0},T_{0}}\Vert_F^q\le \frac{1}{r^{2-q}}\Vert D^*\bar{H}D\Vert_{F}^q.
		\end{align}
		
		\textbf{Step 2}: Estimate the upper bound of $\sum_{i=0,1}\sum_{j\geq2}\Vert\left(D^*HD\right)_{T_{i},T_{j}}\Vert_F^q$.
		
		For $i\in\{0,1\}$, we have
		\begin{align}\label{q:2}
			&\sum_{j\geq2}\Vert\left(D^*HD\right)_{T_{i},T_{j}}\Vert_F^q=\sum_{j\geq2}\Vert\left(D^*HD\right)_{T_{j},T_{i}}\Vert_F^q
			=\sum_{j\geq2}\Vert\left(D^*\tilde{\mathbf{x}}\right)_{T_{i}}\left(D^*\tilde{\mathbf{x}}\right)_{T_{j}}^*\Vert_F^q\nonumber\\
			\leq&\Vert\left(D^*\tilde{\mathbf{x}}\right)_{T_{i}}\Vert_2^q\sum_{j\geq2}\Vert\left(D^*\tilde{\mathbf{x}}\right)_{T_{j}}\Vert_2^q
			\overset{(\ref{add:d1})}{\leq}\Vert\left(D^*\tilde{\mathbf{x}}\right)_{T_{i}}\Vert_2^q\cdot\frac{1}{(rk)^{1-\frac{q}{2}}}\Vert\left(D^*\tilde{\mathbf{x}}\right)_{T_{0}^c}\Vert_q^q\nonumber\\
			\leq&\frac{1}{(rk)^{1-\frac{q}{2}}}\Vert\left(D^*\tilde{\mathbf{x}}\right)_{T_{01}}\Vert_2^q\cdot\Vert\left(D^*\tilde{\mathbf{x}}\right)_{T_{0}^c}\Vert_q^q.
		\end{align}
		Since $\tilde{\mathbf{x}}$ is a solution to the $\ell_q$-analysis minimization $(\ref{QDx})$, yields
		\begin{align}
			\Vert\left(D^*\tilde{\mathbf{x}}\right)_{T_{0}^c}\Vert_q^q
			\leq&\Vert\left(D^*\mathbf{x}_0\right)\Vert_q^q-\Vert\left(D^*\tilde{\mathbf{x}}\right)_{T_{0}}\Vert_q^q
			\leq \Vert D^*\mathbf{x}_0-\left(D^*\tilde{\mathbf{x}}\right)_{T_{0}}\Vert_q^q\nonumber\\
			\leq& k^{1-\frac{q}{2}}\Vert D^*\mathbf{x}_0-\left(D^*\tilde{\mathbf{x}}\right)_{T_{0}}\Vert_2^q
			\leq k^{1-\frac{q}{2}}\Vert D^*\mathbf{x}_0-\left(D^*\tilde{\mathbf{x}}\right)_{T_{01}}\Vert_2^q,\label{mq1}
		\end{align}
		Substituting (\ref{mq1}) into (\ref{q:2}), we obtain
		\begin{align}\label{mq2}
			&\sum_{j\geq2}\Vert\left(D^*HD\right)_{T_{i},T_{j}}\Vert_F^q
			\leq\frac{1}{(rk)^{1-\frac{q}{2}}}\Vert\left(D^*\tilde{\mathbf{x}}\right)_{T_{01}}\Vert_2^q\cdot\Vert\left(D^*\tilde{\mathbf{x}}\right)_{T_{0}^c}\Vert_q^q\nonumber\\
			\leq&\frac{k^{1-\frac{q}{2}}}{(rk)^{1-\frac{q}{2}}}\Vert\left(D^*\tilde{\mathbf{x}}\right)_{T_{01}}\Vert_2^q\cdot \Vert D^*\mathbf{x}_0-\left(D^*\tilde{\mathbf{x}}\right)_{T_{01}}\Vert_2^q\nonumber\\
			\leq&\frac{2^{\frac{q}{2}}}{r^{1-\frac{q}{2}}}\Vert D^*X_0D-(D^*\tilde{X}D)_{T_{01},T_{01}}\Vert_F^q=\frac{2^{\frac{q}{2}}}{r^{1-\frac{q}{2}}}\Vert D^*\bar{H}D\Vert_F^q,\;\;{\mbox{for}}\;\; i=0,1,
		\end{align}
		where we use Lemma~\ref{xy} in the last inequality.
		
		\textbf{Step 3}: Estimate the upper bound of $\Vert D^*\bar{H}D\Vert_F^q$.
		
		Since
		\begin{eqnarray*}
			\Vert\mathcal{A}[D(D^*HD)D^*]\Vert_{2}=\Vert\mathcal{A}(H)\Vert_{2}=\Vert\mathcal{A}(\tilde{X})-\mathcal{A}(X_0)\Vert_{2}=0,
		\end{eqnarray*}
		we know
		\begin{align}\label{}
			0&=\Vert \mathcal{A}[D(D^*HD)D^*]\Vert_{q}^q\nonumber\\
			&\geq\Vert \mathcal{A}[D(D^*\bar{H}D)D^*]\Vert_{q}^q-\Vert \mathcal{A}[D(D^*HD -D^*\bar{H}D)D^*]\Vert_{q}^q.
		\end{align}
		Hence,
		\begin{align}
			&\Vert\mathcal{A}[D(D^*\bar{H}D)D^*]\Vert_q^q\le\Vert\mathcal{A}[D(D^*HD-D^*\bar{H}D)D^*]\Vert_q^q\nonumber\\
			\le&\Vert\mathcal{A}[D((D^*HD)_{T_0,T_{01}^c}+(D^*HD)_{T_{01}^c,T_0})D^*]\Vert_q^q\nonumber\\
			\quad&+\Vert\mathcal{A}[D((D^*HD)_{T_1,T_{01}^c}
			+(D^*HD)_{T_{01}^c,T_1})D^*]\Vert_q^q+\Vert\mathcal{A}[D(D^*HD)_{T_{01}^c,T_{01}^c}D^*]\Vert_q^q\nonumber\\
			\le&2\sum_{j\ge 2}\Vert\mathcal{A}[D(D^*HD)_{T_0,T_{j}}D^*]\Vert_q^q
			+2\sum_{j\ge 2}\Vert\mathcal{A}[D(D^*HD)_{T_1,T_{j}}D^*]\Vert_q^q\nonumber\\
			\quad&+\sum_{i\ge 2}\sum_{j\ge 2}\Vert\mathcal{A}[D(D^*HD)_{T_{i},T_{j}}D^*]\Vert_q^q.\label{dq}
		\end{align}
		Applying the $\ell_q$-DRIP property (\ref{QCDRIP}) to (\ref{dq}), we have
		\begin{align*}
			\varphi\Vert D^*\bar{H}D\Vert_q^q=&\varphi\Vert D(D^*\bar{H}D)D^*\Vert_q^q\le\Vert\mathcal{A}[D(D^*\bar{H}D)D^*]\Vert_q^q\\
			\le&2\psi\sum_{j\ge 2}\Vert D(D^*HD)_{T_0,T_{j}}D^*\Vert_F^q
			+2\psi\sum_{j\ge 2}\Vert D(D^*HD)_{T_1,T_{j}}D^*\Vert_F^q\nonumber\\
			\quad&+\psi\sum_{i\ge 2}\sum_{j\ge 2}\Vert D(D^*HD)_{T_{i},T_{j}}D^*\Vert_F^q\\
			\leq&\psi\left(2\sum_{i=0,1}\sum_{j\geq2}\Vert(D^*HD)_{T_i,T_j}\Vert_{F}^q+\sum_{i\geq2}\sum_{j\geq2}\Vert(D^*HD)_{T_i,T_j}\Vert_{F}^q\right)\\
			\leq&\psi\left(\frac{1}{r^{2-q}}+\frac{2^{2+\frac{q}{2}}}{r^{1-\frac{q}{2}}}\right)\Vert D^*\bar{H}D\Vert_{F}^q,
		\end{align*}
		where we use (\ref{q:1}) and (\ref{mq2}) in the last step.
		Since
		\begin{equation*}
			\varphi>\psi(\frac{1}{r^{2-q}}+\frac{2^{2+\frac{q}{2}}}{r^{1-\frac{q}{2}}}),
		\end{equation*}
		we get 
		\begin{equation}\label{dhd}
			\Vert D^*\bar{H}D\Vert_F^q=0.
		\end{equation}
		
		Finally, substituting (\ref{q:1}), (\ref{mq2}) and (\ref{dhd}) into (\ref{hf}), we have
		\begin{align*}
			\Vert H\Vert_F^q
			\leq&\Vert\left(D^*HD\right)_{T_{01}^c,T_{01}^c}\Vert_F^q+2\sum_{i=0,1}\sum_{j\geq2}\Vert\left(D^*HD\right)_{T_{i},T_{j}}\Vert_F^q+\Vert D^*\bar{H}D\Vert_F^q\\
			\le&\frac{1}{r^{2-q}}\Vert D^*\bar{H}D\Vert_{F}^q+\frac{2^{\frac{q}{2}}}{r^{1-\frac{q}{2}}}\Vert D^*\bar{H}D\Vert_F^q+\Vert D^*\bar{H}D\Vert_F^q\\
			=&\left(\frac{1}{r^{2-q}}+\frac{2^{\frac{q}{2}}}{r^{1-\frac{q}{2}}}+1\right)\Vert D^*\bar{H}D\Vert_F^q=0,
		\end{align*}
		which completes the proof.
	\end{proof}
	
	\begin{Remark}
		When $D=I$, our Theorem~\ref{qrecovery} reduces to Theorem~3.1 in \cite{XZ23}.
	\end{Remark}
	
	\section{Conclusion}\label{sec:C}
	
	In \cite{G2016}, Gao has shown that S-DRIP can guarantee a stable recovery of real dictionary-sparse signals from phaseless measurements. However, it remains unknown whether this conclusion also holds for complex signals. As a supplement to \cite{G2016}, our paper fills in this research gap. In this paper, we study the recovery of complex dictionary-sparse signals from the magnitude-only measurements via both $\ell_1$-analysis minimization and its generalized $\ell_q\;(0<q\leq1)$-analysis model. First, we introduce a sufficient condition $\ell_1$-DRIP on quadratic measurements, which can ensure stable reconstruction of complex dictionary-sparse signals by solving $\ell_1$-analysis minimization. Then, we extend the $\ell_1$-DRIP to $\ell_q\;(0<q\leq1)$-analysis model without the noise. However, there are still many unresolved issues in the field of dictionary-sparse phase retrieval, for example, the lack of effective reconstruction algorithms. One of our future directions is to propose some dictionary-sparse phase retrieval reconstruction algorithms with low computational complexity.
	
	%

	\section*{Acknowledgements}
	This work is supported partially by the National Natural Science Foundation of China (Grant no. 12261059), and the Natural Science Foundation of Jiangxi Province (Grant no. 20224BAB211001).

	\section*{Conflict of Interest}
	This work does not have any conflicts of interest.
	

\end{document}